\title{Longest (Sub-)Periodic Subsequence}
\author{Hideo Bannai  \and Tomohiro I \and Dominik K\"oppl}
\date{}
\definecolor{teigiIro}{HTML}{5700B5}
\newcommand*{\teigi}[1]{{\color{teigiIro}\emph{#1}}} 
\newcommand*{\LCS}[2]{\ensuremath{{\LCSs}_{#1}({#2})}}
\newcommand*{\LCSs}{\ensuremath{\textrm{LCS}}}
\newcommand*{\Rat}{\ensuremath{\mathbb{Q}^+}}
\begin{document}

\maketitle

 \begin{abstract}
	 We present an algorithm computing the longest periodic subsequence of a string of length~$n$ in \Oh{n^7} time with \Oh{n^4} words of space.
	 We obtain improvements when restricting the exponents or extending the search allowing the reported subsequence to be subperiodic down to \Oh{n^3} time and \Oh{n^2} words of space.
 \end{abstract}
\section{Introduction}

A natural extension of the analysis of regularities such as squares or palindromes perceived as substrings of a given text is the study of the same type of regularities when considering subsequences.
In this line of research, given a text of length~$n$, \citet{kosowski04efficient} proposed an algorithm running in \Oh{n^2} time using \Oh{n} words of space to find the longest subsequence that is a square.
\citet{inoue18squaresubseq} generalized this setting to consider the longest such subsequence common of two texts~$T$ and $S$ of length~$n$, and gave an algorithm computing this sequence in \Oh{n^6} time using \Oh{n^4} space, 
also providing improvements in case that the number of matching characters pairs $T[i] = S[j]$ is rather small.
Recently, \citet{inoue20longest} provided similar improvements for the longest square subsequence of a single string.
Here, we consider the problem for a single text, but allow the subsequence to have different exponents.
In detail, we want to find the longest subsequence that is (sub-)periodic.

A non-exhaustive list of related problems are finding
the longest palindromic subsequence~\cite{chowdhury14computing,inenaga18hardness},
absent subsequences~\cite{kosche21absent},
longest increasing and decreasing subsequences~\cite{schensted61longest,duraj20subquadratic},
maximal common subsequences~\cite{sakai19maximal,conte19polynomial},
the longest run subsequence~\cite{schrinner20longest},
the longest Lyndon subsequence~\cite{bannai22computing}, 
longest rollercoasters~\cite{gawrychowski19rollercoasters,biedl19rollercoasters,fujita21longest},
and computing subsequence automaton~\cite{bille17subsequence,baeza-yates91searching}.
Our techniques rely on finding longest common subsequences, 
which is conceived as a well-studied problem (see~\cite{hirschberg75linear,hirschberg77algorithms,kiyomi21longest} and the references therein).

\section{Preliminaries}
Let $\N$ denote all natural numbers~$1,2,\ldots$, and $\Rat$ the set of all rational numbers greater than or equal to 1.
We distinguish integer intervals $1,\ldots,n = [1..n]$ and intervals of rational numbers $[1/2,3/4]$.

Let $\Sigma$ denote a totally ordered set of symbols called the \teigi{alphabet}.
An element of $\Sigma^*$ is called a \teigi{string}.
Given a string $S \in \Sigma^*$, we denote its length with $|S|$ and
its $i$-th symbol with $S[i]$ for $i \in [1..|S|]$.
Further, we write $S[i..j] = S[i]\cdots S[j]$.
A \teigi{factorization} $T = F_1 \cdots F_z$ is a partitioning of $T$ into substrings $F_1,\ldots,F_z$.
A \teigi{subsequence} of a string~$S$ with length~$\ell$ is a string $S[i_1] \cdots S[i_\ell]$ with $i_1 < \ldots < i_\ell$.

Given a string~$S$, we can write $S$ in the form $S = U^x U'$ with $U'$ being either empty or a proper prefix of $U$.
Then $|U|$ is called a \teigi{period} of $S$, and 
$x+|U'|/|U| \in \Rat$ is called its \teigi{exponent} with respect to the period~$|U|$.
For the largest possible such exponent~$x$,
$S$ is called \teigi{periodic} if $x \ge 2$, or \teigi{sub-periodic} if $x \in (1,2)$.
For instance, the unary string $T = \texttt{a}\cdots\texttt{a}$ has the minimum period~$1$ with exponent~$|T|$,
or more generally, period~$p \in [1..|T|]$ with exponent~$|T|/p$. 

Further, for two strings~$Y$ and $Z$, let \LCS{Y,Z}{y,z} denote the longest common subsequence of $Y[1..y]$ and $Z[1..z]$. 
We omit the strings in subscript if they are clear from the context. 
Also, we allow us to write $\LCSs$ for an arbitrary number of strings; 
the number of strings is reflected by the arguments given to $\LCSs$.
It is known that we can answer the longest common subsequence $\LCS{X_1,\ldots,X_k}{x_1,\ldots,x_k}$ of $k$ strings $X_1[1..x_1],\ldots,X_k[1..x_k]$
in constant time by building a table of size $\Oh{n^k}$ and filling its entries in \Oh{k n^k} time via dynamic programming.

\paragraph{Structure of the Paper}
In what follows, we first show an algorithm (\cref{secStarter}) computing the longest subsequence that is periodic or sub-periodic.
Subsequently, we refine this algorithm to omit the sub-periodic subsequences by allowing more time and space in \cref{secComputingPeriodic}.
\Cref{tableResults} gives an overview of our obtained results.
We observe that, if we are only interested in the longest subsequence having an exponent $> 1$,
we obtain a algorithm faster than those finding subsequences with more restricted exponents.

\begin{table}
	\caption{Space and time complexities for finding periodic subsequences of specific exponents.
		$\epsilon$ is a rational number with $0 < \epsilon < 1$.
		The exponent column means that a subsequence is considered only if it has at least one exponent within the domain given in the column.
	}
	\label{tableResults}
	\centerline{\begin{tabular}{llll}
			\toprule
			exponent & time & space & solution
			\\\midrule
			$2 \mathbb{N}$ & {\Oh{n^2}} & {\Oh{n}} & \cite{kosowski04efficient} \\
			$\N + \epsilon$ & {\Oh{n^3}} & {\Oh{n^2}} & \cref{thmSubPeriodic} \\
			$\Rat \cap ((2,3] \cup [4,\infty))$ & {\Oh{n^5}} & {\Oh{n^3}} & \cref{thmPeriodicThree}\\
$(3 + \epsilon) \mathbb{N}$ & {\Oh{n^7}} & {\Oh{n^4}} & \cref{thmPeriodicFour} \\
			\bottomrule
		\end{tabular}
	}\end{table}

A key observation is that a longest (sub-)periodic subsequence~$S$ 
is \teigi{maximal},
meaning that no occurrence $T[i_1]T[i_2]\cdots T[i_{|S|}]$ of $S$ in $T$ can be extended with a character 
in $T[1..i_1-1]$ or $T[i_{|S|}+1..]$ to form a longer subsequence without breaking the periodicity.

\section{Longest (Sub-)periodic Subsequence}\label{secStarter}
We start with the search for the longest subsequence that is periodic or subperiodic,
meaning that one of its
exponents is in $\Rat \setminus \N = (1,2)\cup (2,3) \cup (3,4)\cdots$.
The idea is to compute every possible factorization of $T = YZ$ into two factors, 
and try to (a) prolong each square $UU$ to $UcU$ for $Uc$ being a subsequence in $Y$,
or (b) extend $UU'$ to $UcU'$ with $U'$ being a proper prefix of $U$ found in $Z$ and $Uc$ a subsequence found in $Y$.
For a fixed partition $T = YZ$, we define

\begin{align}\label{eqSubSequence}
D_2[y,z] := \max 
\begin{cases}
	2 \cdot \LCS{Y,Z}{y-1,z}+ 1 &\text{~if~} \LCS{Y,Z}{y-1,z-1} > 0, \\
	D_2[y-1,z] + 1 &\text{~if~} D_2[y-1,z] > 0, \\
	D_2[y  ,z-1],\\
	0,
\end{cases}
\end{align}
for $y \in [1..|Y|]$ and $z \in [1..|Z|]$
to be the longest subsequence of $T$ having an exponent in $\Rat \setminus \N $ of the form $UU'$ 
with $U'$ being a non-empty common subsequence of $Y[1..y]$ and $Z[1..z]$ and $U$ a subsequence of $Y[1..y]$,
where $D_2[\cdot,0] := D_2[0,\cdot] := 0$.
Note that we can add $D_2[y-1,z-1] +1 \text{~if~} D_2[y-1,z-1] > 0$ as another selectable value for the maximum determining $D_2[y,z]$,
but this does not change the maximum, since the value of $D_2[y-1,z-1]+1$ is used as an option for the maximum determining $D_2[y,z-1]$, which is already an option for $D_2[y,z]$.
$D_2$ has the following property:

\begin{figure}[t]
	\centering{\ttfamily
\begin{tabular}{clllll}
			\toprule
			\backslashbox{$Z$}{$Y$}
			& \multicolumn{1}{c}{a} & \multicolumn{1}{c}{b} & \multicolumn{1}{c}{c} & \multicolumn{1}{c}{a} & \multicolumn{1}{c}{a}
			\\\midrule
			$Z[1] = $ a & $/$ & aba & abca & abcaa & abcaaa \\
			$Z[2] = $ c & $/$ & aba & abca & acaac & acaaac \\
			$Z[3] = $ a & $/$ & aba & abca & acaac & acaaaca \\
			\bottomrule
		\end{tabular}
	}\caption{$D_2$ of \cref{secStarter} for the text $T = \texttt{abcaaaca}$ with the factorization $T = YZ$ with
	$Y= \texttt{abcaa}$ and $Z = \texttt{aca}$.}
	\label{fig}
\end{figure}

\begin{lemma}\label{lemSubPeriodic}
	$D_2[y,z]$ is the length of the longest subsequence of $T$ of the form $UU'$ with $U'$ being a proper prefix of $U$ and a common subsequence of $Y[1..y]$ and $Z[1..z]$, and $U$ being a subsequence of $Y[1..y]$.
\end{lemma}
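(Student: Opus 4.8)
The plan is to prove \cref{lemSubPeriodic} by induction on $y+z$, matching the recurrence in \cref{eqSubSequence} against the combinatorial quantity it is meant to compute. I would first fix notation: let $\Phi[y,z]$ denote the length of the longest subsequence of $T$ of the stated form $UU'$ (with $U'$ a nonempty proper prefix of $U$, $U'$ a common subsequence of $Y[1..y]$ and $Z[1..z]$, and $U$ a subsequence of $Y[1..y]$), and $0$ if no such subsequence exists. The goal is to show $D_2[y,z] = \Phi[y,z]$ for all $y \in [1..|Y|]$, $z \in [1..|Z|]$, with the base cases $D_2[\cdot,0] = D_2[0,\cdot] = 0$ being immediate since an empty $Y$ or $Z$ prefix admits no nonempty common subsequence $U'$.

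For the inductive step I would argue the two inequalities $D_2[y,z] \le \Phi[y,z]$ and $D_2[y,z] \ge \Phi[y,z]$ separately. For the lower bound ($\ge$), I would verify that each of the four options in the $\max$ of \cref{eqSubSequence} describes a genuine subsequence of the required form, so that $\Phi[y,z]$ is at least whatever the recurrence selects. The first option corresponds to taking $U' = U[1]$ to be a single matching character witnessed by $\LCS{Y,Z}{y-1,z-1} > 0$ and building $U = (U[1])\,W$ where $W$ is a longest common subsequence of $Y[1..y-1]$ and $Z[1..z]$; the factor $2 \cdot \LCS{}{y-1,z} + 1$ counts $U$ twice plus the extra leading symbol that makes $U'$ a proper prefix. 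The second and third options correspond to extending a shorter solution by appending a symbol to $U$ from $Y$ (incrementing $y$) or by merely enlarging the search window in $Z$ (incrementing $z$), respectively; here the $>0$ guards ensure we only extend an already nonempty solution.

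For the upper bound ($\le$) I would take an optimal subsequence of the form $UU'$ realizing $\Phi[y,z]$ and perform a case analysis on where its rightmost relevant symbols sit. If the last symbol of $U'$ is not forced to use $Y[y]$ or $Z[z]$, the solution is already counted by $D_2[y-1,z]$ or $D_2[y,z-1]$; the delicate case is when $U'$ has length exactly one, where the optimum decomposes as a doubled common subsequence $U = U'U''$ plus a single prefix symbol, matching the first branch of the recurrence. I would lean on the maximality observation stated just before \cref{secStarter} to argue that an optimal solution can always be assumed to place its characters greedily, so that no option is missed. The remark in the text — that adding a $D_2[y-1,z-1]+1$ branch is redundant because that value already propagates through $D_2[y,z-1]$ — should be invoked to keep the case analysis to the four listed options.

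The main obstacle I expect is the first branch, namely establishing that $2 \cdot \LCS{Y,Z}{y-1,z} + 1$ correctly captures the case $|U'| = 1$ and, conversely, that every optimal solution whose $U'$ is longer than one symbol is strictly dominated by a value already reachable through the increment-in-$z$ transitions. Concretely, I must show that a solution with $|U'| = k \ge 1$ can be re-expressed, without loss of length, as one where the prefix $U'$ is peeled off one symbol at a time while its matched copy in $U$ migrates within $Y[1..y]$; this is where aligning the "common subsequence in both $Y$ and $Z$" condition on $U'$ with the "subsequence of $Y$ only" condition on $U$ demands care, since the index $z$ controls only the $U'$ part. Getting the off-by-one accounting right (the $+1$ versus the doubling, and why $\LCS{}{y-1,z}$ rather than $\LCS{}{y-1,z-1}$ appears in the count while $\LCS{}{y-1,z-1} > 0$ appears in the guard) will be the technical crux, and I would settle it by exhibiting the explicit subsequence witnessing each direction rather than arguing abstractly.
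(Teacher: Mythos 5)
Your overall framework (a cell-wise induction on $y+z$ proving $D_2[y,z]=\Phi[y,z]$ by soundness and completeness of each branch) is legitimate, and is in fact more systematic than the paper's own proof, which instead traces one globally optimal $UU'$ through the table. However, your proposal has a genuine gap: you misread the first branch of \cref{eqSubSequence}, and both directions of your argument inherit the error. The value $2\cdot\LCS{Y,Z}{y-1,z}+1$ does not encode a solution with a single-character prefix $U'=U[1]$ and $U=U[1]\,W$; that string $U[1]\,W\,U[1]$ has length $|W|+2$, not $2|W|+1$, so the witness you exhibit for the lower bound does not attain the value the recurrence assigns. The intended reading is the opposite one: take $W$ a longest common subsequence of $Y[1..y-1]$ and $Z[1..z]$, set $U':=W$ (the \emph{whole} LCS) and $U:=W\,Y[y]$, giving the witness $W\,Y[y]\,W$ of length exactly $2|W|+1$. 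Consequently the ``delicate case'' of your upper bound is also misidentified: it is not $|U'|=1$ but $|U|=|U'|+1$ with $U'$ of arbitrary length. Your proposed repair --- peeling the prefix $U'$ off one symbol at a time while its copy migrates inside $Y$ --- attacks the wrong case; what one actually peels is the \emph{tail} of $U$, one $Y$-character at a time (this is the second branch $D_2[y-1,z]+1$ run backwards), until $U$ is exactly one character longer than $U'$, at which point the first branch applies because $U'$ is then a common subsequence of $Y[1..y-1]$ and $Z[1..z]$, hence $\LCS{Y,Z}{y-1,z}\ge|U'|$.

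For comparison, the paper's proof fixes the optimal $UU'$ with occurrence positions $y_1<\dots<y_{|U|}$ in $Y$ and $z_1<\dots<z_{|U'|}$ in $Z$, shows by an exchange argument that $\LCS{Y,Z}{y_{|U'|},z_{|U'|}}=|U'|$ exactly (otherwise a longer valid subsequence would exist), observes that by maximality $U$ must absorb the entire suffix of $Y$ after position $y_{|U'|}$, and then checks that the table reproduces this length: branch one fires at cell $[y_{|U'|}+1,z_{|U'|}]$ with value $2|U'|+1$, branch two adds the remaining characters of $Y$, and branch three propagates the value to $D_2[|Y|,|Z|]$. Finally, the point you defer as the ``technical crux'' cannot be settled the way you plan: the guard $\LCS{Y,Z}{y-1,z-1}>0$ is not explained by your single-character reading, and taken literally it is even too strong --- for $Y=\texttt{ba}$, $Z=\texttt{b}$ the cell $[2,1]$ should hold $\texttt{bab}$, yet the guard $\LCS{Y,Z}{1,0}>0$ fails. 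The worked example in the paper's figure (e.g.\ the entry $\texttt{aba}$ at $[2,1]$) is consistent only with the guard $\LCS{Y,Z}{y-1,z}>0$, whose sole purpose is to ensure the constructed prefix $U'=W$ is nonempty; a correct write-up of either your induction or the paper's argument has to work with that reading rather than build a case analysis around the printed $z-1$.
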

\begin{proof}
	Let $U U'$ be the longest subsequence having an exponent in $\Rat \setminus \N $ in $T$
	with $Y[y_1] \cdots Y[y_{|U|}] Z[z_1] \cdots Z[z_{|U'|}] = U U'$.
	We partition $T = YZ$ such that $U$ and $U'$ are subsequences in $Y$ and $Z$, respectively.
	Since $U'$ is a proper prefix of $U$, $\LCS{Y,Z}{y_{|U'|},z_{|U'|}} \ge |U'|$.
	If $\LCS{Y,Z}{y_{|U'|},z_{|U'|}} > |U'|$, then there is another common subsequence $V'$ of $Y$ and $Z$,
	and $V' U[|U'|..]$ is a longer subsequence of $Y$ than $U$, so
	$V' U[|U'|..] U'$ is a longer subsequence having an exponent in $\Rat \setminus \N$ than $U U'$, a contradiction.
	Therefore, $\LCS{Y,Z}{y_{|U'|},z_{|U'|}} = |U'|$.
	Obviously, $Y[y_{|U'|}..] = U[y_{|U'|}..]$, 
	since otherwise we could extend $U$ further.

	Finally, we show that the sequence $Y[y_1] \cdots Y[y_{|U|}] Z[z_1] \cdots Z[z_{|U'|}]$ is considered in the construction of $D_2$.
	Since $y_{|U'|}+1 \le y_{|U'|+1} \le |Y|$ (otherwise $U'$ cannot be a proper prefix of $U$), 
	$D_2[y_{|U'|}+1,z_{|U'|}] \ge 2\cdot\LCS{Y,Z}{y_{|U'|},z_{|U'|}} + 1$,
	and equality results from the fact that we otherwise have found a longer subsequence 
	$V V'$ in $Y[1..y_{|U'|}+1]Z[1..z_{|U'|}]$ that we can extend to a subsequence of $YZ$ longer than $U U'$ by applying the second option in \cref{eqSubSequence}.
	The third option of \cref{eqSubSequence} fills up $D_2$ such that we obtain the length of $U U'$ in $D_2[|Y|,|Z|]$.
\end{proof}

\begin{theorem}\label{thmSubPeriodic}
	We can find the longest subsequence having an exponent in $\Rat \setminus \N$ in \Oh{n^3} time using \Oh{n^2} space.
\end{theorem}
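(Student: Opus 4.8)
The plan is to combine \cref{lemSubPeriodic} with a counting argument over all factorizations $T = YZ$. First I would observe that any subsequence with exponent in $\Rat \setminus \N$ has the form $UU'$ where $U'$ is a proper prefix of $U$; by fixing the boundary between where $U$ ends and where $U'$ is drawn from, such a subsequence is witnessed by some split $T = YZ$ with $U$ a subsequence of $Y$ and $U'$ a common subsequence of $Y$ and $Z$. Concretely, I would range $Y = T[1..k]$ and $Z = T[k+1..n]$ over all $k \in [1..n-1]$, compute the table $D_2$ for each split, and take the maximum of $D_2[|Y|,|Z|]$ over all splits as the length of the longest such subsequence.

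Next I would account for the cost of a single split. \Cref{eqSubSequence} requires, for each pair $(y,z)$, a constant number of table lookups into $D_2$ and into a precomputed $\LCS{Y,Z}{\cdot,\cdot}$ table. The key preliminary fact, quoted from the preliminaries, is that the longest common subsequence table $\LCS{Y,Z}{\cdot,\cdot}$ of two strings can be built in \Oh{n^2} time and stored in \Oh{n^2} space. Given this table, every entry of $D_2$ is evaluated in constant time, so filling the \Oh{n^2} entries of $D_2$ costs \Oh{n^2} time and \Oh{n^2} space per split.

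I would then multiply across the \Oh{n} choices of split: the total time is $\Oh{n} \cdot \Oh{n^2} = \Oh{n^3}$. For space, since the splits are processed one at a time and only the running maximum need be retained between splits, the $\LCS$ and $D_2$ tables for the current split can be reused, giving \Oh{n^2} words of space overall. I would also note that one must recover an actual witnessing subsequence, not merely its length, which is handled in the standard way by storing back-pointers in $D_2$ (and, if desired, in the $\LCS$ table) and tracing back from the optimal entry; this does not change the asymptotic bounds.

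The main obstacle is not the complexity arithmetic, which is routine, but the correctness claim that ranging over all splits $T = YZ$ actually captures the global optimum. \Cref{lemSubPeriodic} certifies, for a \emph{fixed} split, that $D_2[|Y|,|Z|]$ equals the longest $UU'$ with $U$ in $Y$ and $U'$ a common subsequence of $Y$ and $Z$; the remaining point is that every globally optimal $UU'$ is realized at \emph{some} split. This follows because the occurrence $Y[y_1]\cdots Y[y_{|U|}] Z[z_1]\cdots Z[z_{|U'|}]$ of an optimal subsequence determines a boundary index, and choosing the split at that boundary places $U$ entirely in $Y$ and the proper prefix $U'$ as a common subsequence of the prefix $Y$ and the suffix $Z$, exactly the configuration measured by $D_2[|Y|,|Z|]$; hence the maximum over splits is tight.
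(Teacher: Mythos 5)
Your proposal is correct and follows essentially the same route as the paper's proof: iterate over the $n$ factorizations $T = YZ$, precompute the \LCS{Y,Z}{\cdot,\cdot} table in \Oh{n^2} time and space per split so each cell of $D_2$ is filled in constant time via \cref{eqSubSequence}, reuse the tables across splits, and take the maximum, giving \Oh{n^3} time and \Oh{n^2} space. Your additional remarks---that correctness over all splits follows from \cref{lemSubPeriodic} together with the observation that every optimal $UU'$ is witnessed at some split, and that back-pointers allow recovering the subsequence itself---are sound elaborations the paper leaves implicit.
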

\begin{proof}
	For each of the $n$ different partitions $T = YZ$,
	we precompute a table answering $\LCS{Y,Z}{y,z}$ in constant time.
	This table needs \Oh{n^2} space and can be constructed in \Oh{n^2} time.
	Next, we create a table $D_2$ of size \Oh{n^2}, and fill each of its cells by \cref{eqSubSequence} in constant time thanks to the precomputation step.
	In total, we need \Oh{n^2} time per partition $T = YZ$, and therefore \Oh{n^3} time for the entire computation.
\end{proof}

In the following we want to omit subsequences having exponents only in $[1,2)$.

\section{Longest Periodic Subsequence}\label{secComputingPeriodic}
We now extend our ideas of the previous section to omit the sub-periodic subsequences,
such that our algorithm always reports a subsequence with an exponent of at least $2$.
Our main idea is to generalize the factorization of $T$ from $2$ to $k$ factors.
	Computing the $\LCSs$ of these $k$ factors,
		we obtain all longest periodic subsequences having an exponent of length $k\ell$ with $\ell \in \mathbb{N}$.
	For $k=2$, we can find all square subsequences, i.e., the longest common subsequence with an exponent of $2x$ for $x \in \mathbb{N}$, similar to \cite{kosowski04efficient}.
		Like in \cref{secStarter}, we can support exponent values $\ell \in \Rat$ by stopping matching characters in the last factor.
	In general, the number of factors $k = 3$ is a good value if the exponent is not in $(3,4)$.
		With an exponent $x \in (3,4)$, 
		each factor starts capturing at the root of the repetition, i.e.,
		if we match a subsequence $S = U^{\gauss{x}} U'$, then we start capturing $U$ in all factors simultaneously.
		However, the last factor has to capture $|U U'|$ characters if $x \in (3,4)$.
		Hence, we need to split this last factor such that we have four factors, i.e., we need $k=4$ for $x \in (3,4)$.

	However, $k = 3$ suffices for $x \in (2,3] \cup [4,\infty)$.
		To see that, we let the first $k-1$ factor capture the subsequence $U^{\gauss{x/k-1}}$.
		The last factor captures $U^y$ with $y = x - \gauss{x/(k-1)} (k-1)$, which works if $y \le \gauss{x/(k-1)}$,
		i.e., $x \le 3 \gauss{x/2}$, which holds for $x \ge 4$.
		For $x \in (2,3]$, each of the first factors captures $U$, while the last factor captures $U^y$ with $y \le 1$.

\subsection{Three Factors}
There are ${n \choose 2} = \Oh{n^2}$ possible factorizations of the form $T = XYZ$.
Let us fix one such factorization $T = XYZ$.
For this factorization, we define the three-dimensional table $D_3[1..|X|,1..|Y|,1..|Z|]$ by
\[
D_3[x,y,z] := \max 
\begin{cases}
	3 \cdot \LCS{X,Y,Z}{x-1,y-1,z} + \delta_{xy}\\
D_3[x-1,y-1,z]+\delta_{xy}\\
D_3[x-1,y  ,z]\\
D_3[x  ,y-1,z]\\
D_3[x  ,y  ,z-1]\\
\end{cases}\]
where $\delta_{xy} := 2$ if $X[x]=Y[y]$ and $0$ otherwise.
Compared to $D_2$, the number options for a cell value have increased.
The first option takes the longest common subsequence $U$ of $X[1..x-1]$, $Y[1..y-1]$ and $Z[1..z]$, which induces the subsequence $U^3$ of $T$,
and tries to prolong $U^3$ to a subsequence $(Uc)^2 U'$ of $T$ with $U'$ being a prefix of $U$ and $c = X[x] = Y[y]$ 
(or gives $U^3$ if $X[x] \not= Y[y]$).
The second option performs an extension of $U^2 U'$ to $(Uc)^2 U'$ with $c = X[x] = Y[y]$.
The last options are similar to the standard $\LCSs$ computation by copying a previously computed result on a mismatch of $X[x]$ and $Y[y]$.

\begin{lemma}\label{lemPeriodicThree}
	$D_3[x,y,z]$ is the longest subsequence $U^2 U'$ of $T$ with $U'$ (possibly empty) prefix $U$ 
	such that $U$ is a common subsequence of $X[1..x]$, and $Y[1..y]$,
	and $U'$ is a subsequence of $Z[1..z]$.
\end{lemma}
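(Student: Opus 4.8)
The plan is to follow the template of the proof of \cref{lemSubPeriodic} and establish the two inequalities separately: soundness (the recurrence never overshoots the claimed optimum) read off directly from the five options, and completeness (it attains the optimum) by a three-way exchange argument combined with an induction on $x+y+z$. The base cases $D_3[0,\cdot,\cdot]=D_3[\cdot,0,\cdot]=0$ are immediate, since an empty factor forces $U$, and hence the whole subsequence $U^2U'$, to be empty.

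For soundness I would check that whichever option realizes the maximum yields a subsequence of the claimed shape, so that $D_3[x,y,z]$ is always a legal length. Writing $V:=\LCS{X,Y,Z}{x-1,y-1,z}$, the first option sets $U:=V\cdot X[x]$ and $U':=V$ when $X[x]=Y[y]$, giving $U^2U'$ of length $3|V|+2=3\cdot\LCS{X,Y,Z}{x-1,y-1,z}+\delta_{xy}$ with $U'$ a prefix of $U$ living in $Z[1..z]$ (and $U^3$ of length $3|V|$ when $X[x]\neq Y[y]$). The second option extends an inductively legal $U_0^2U_0'$ to $(U_0c)^2U_0'$ with $c=X[x]=Y[y]$, which keeps $U_0'$ a prefix of the lengthened base and adds $\delta_{xy}=2$. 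The last three options are the ordinary $\LCSs$ copy moves that drop $X[x]$, $Y[y]$, or $Z[z]$, and preserve legality because the admissible prefixes only shrink.

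For completeness I would fix a longest legal $U^2U'$, realized by index sequences $a_1<\dots<a_{|U|}\le x$ in $X$, $b_1<\dots<b_{|U|}\le y$ in $Y$, and $c_1<\dots<c_{|U'|}\le z$ in $Z$, and show the recurrence recovers $2|U|+|U'|$. The crux is the analogue of the identity $\LCS{Y,Z}{y_{|U'|},z_{|U'|}}=|U'|$ from \cref{lemSubPeriodic}: here I claim $\LCS{X,Y,Z}{a_{|U'|},b_{|U'|},c_{|U'|}}=|U'|$. The inequality ``$\ge$'' holds because $U'$ is a common subsequence of the three indicated prefixes; for ``$\le$'', if some common subsequence $V'$ of these prefixes were longer than $U'$, then replacing the prefix portion of the base by $V'$ and re-attaching the tail $R:=U[|U'|+1..|U|]$ (matched strictly after $a_{|U'|}$ in $X$ and after $b_{|U'|}$ in $Y$) yields the legal subsequence $(V'R)^2V'$ of length $3|V'|+2|R|>3|U'|+2|R|=2|U|+|U'|$, contradicting optimality. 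With this identity in hand, the first option seeds the value $3|U'|$ (up to the match term $\delta_{xy}$) at the split, and peeling the characters of $R$ one at a time through the second option, interleaved with the copy options that skip the unused positions of $X$, $Y$, and $Z$, propagates $2|U|+|U'|$ up to $D_3[x,y,z]$.

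The main obstacle is exactly this completeness step, and specifically the three-way coordination inside the exchange argument. In \cref{lemSubPeriodic} the swap had to be consistent only across the two factors $Y$ and $Z$, whereas now $U'$ must simultaneously be a prefix of the copy matched in $X$, of the copy matched in $Y$, and a subsequence in $Z$; the replacement $V'$ therefore has to be inserted into all three factors at once without disturbing the tail $R$, which lives only in $X$ and $Y$. I expect the delicate case to be a proper nonempty prefix (exponent strictly between $2$ and $3$), where one must verify that the tail can always be detached by a single second-option step reducing to $D_3[a_{|U|}-1,b_{|U|}-1,z]$ and that this detachment never forces $|U'|$ to change; the boundary case $U'=U$ (exponent $3$), where the base carries no tail, has to be reconciled with the seeding option separately.
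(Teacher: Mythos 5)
Your core argument is the same as the paper's: the three-way exchange showing $\LCS{X,Y,Z}{a_{|U'|},b_{|U'|},c_{|U'|}}=|U'|$, followed by seeding at the split and peeling the tail through the recurrence, is exactly how the paper argues (it phrases the propagation as ``$V[|U'|..]$ is the longest common subsequence of $X[x_{|U'|}..]$ and $Y[y_{|U'|}..]$, computed in $D_3$''), and your argument is sound in the regime where $U'$ is a \emph{proper} prefix of $U$. The genuine gap is the boundary case $U'=U$ that you defer at the end: it cannot be ``reconciled with the seeding option separately,'' because the cell-level claim is false there. Take $T=\texttt{aaa}$ with $X=Y=Z=\texttt{a}$: the pair $U=U'=\texttt{a}$ gives the legal subsequence $\texttt{aaa}$ of length $3$, yet $D_3[1,1,1]=2$, since the first option contributes only $3\cdot\LCS{X,Y,Z}{0,0,1}+\delta_{xy}=2$, the second contributes $D_3[0,0,1]+2=2$, and the copy options give at most $2$. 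Structurally, the first option always decrements the $X$- and $Y$-coordinates of the $\LCSs$ value it triples, so a cube whose last base character is matched at the frontier of all three factors is never counted at full value. Note also that your two halves silently read the statement in incompatible ways: your soundness step needs $U'=U$ to be admissible (the $\delta_{xy}=0$ branch of option one produces $V^3$), while your completeness step needs a nonempty tail $R$, i.e.\ $U'\neq U$. This tension is a symptom of the failure above, not a presentation issue you can smooth over.

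The paper escapes this in the opening sentence of its proof, for which your proposal has no counterpart: subsequences with an exponent in $3\mathbb{N}$ --- precisely the $U'=U$ shapes --- are not claimed to be captured by $D_3$ at all, but by the separate value $3\cdot\LCS{X,Y,Z}{|X|,|Y|,|Z|}$; the $D_3$-analysis is then restricted to exponents in $I=\Rat\cap((2,3]\cup[4,\infty))\setminus 3\mathbb{N}$, each of which admits a representation $U^2U'$ with $U'$ a proper prefix of $U$. The clean repair of your write-up is therefore: keep your soundness argument, but only for the weak invariant that $D_3[x,y,z]$ never exceeds the optimum over all prefixes $U'$ of $U$ (allowing $U'=U$); keep your seed-and-peel completeness, but only for proper prefixes; and state the lemma about $\max\bigl\{D_3[x,y,z],\,3\cdot\LCS{X,Y,Z}{x,y,z}\bigr\}$, observing that the optimum restricted to $U'=U$ is by definition $3\cdot\LCS{X,Y,Z}{x,y,z}$. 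The sandwich then closes without the exchange argument, which in this formulation is needed only if one insists on per-cell exactness of $D_3$ itself --- an exactness that, as the counterexample shows, does not hold.
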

\begin{proof}
	The longest periodic subsequence with an exponent in $3\mathbb{N}$ is given by $\LCS{X,Y,Z}{|X|,|Y|,|Z|}$.
	In what follows, we show that we can also find the longest periodic subsequence with an exponent not divisible by three,
	i.e., the exponent is in $I := \Rat \cap ((2,3] \cup [4,\infty)) \setminus 3\mathbb{N}$.
	This longest periodic subsequence has the form $U^2 U' = X[x_1]\cdots X[x_{|U|}]Y[y_1]\cdots Y[y_{|U|}]Z[1]\cdots Z[z_{|U'|}]$ with $U'$ being a (not necessarily proper) prefix of $U$ (see the aforementioned discussion of why this subsequence has to admit this form).
	Then $U'$ is a common subsequence of $X$, $Y$, and $Z$ with $\LCS{X,Y,Z}{x_{|U'|},y_{|U'|},z_{|U'|}} \ge |U'|$.
	Similar to the proof of \cref{lemSubPeriodic}, we can argue that 
	$\LCS{X,Y,Z}{x_{|U'|},y_{|U'|},z_{|U'|}} = |U'|$, otherwise $T$ has a longer periodic subsequence with an exponent in $I$.
	Hence, the longest periodic subsequence with an exponent in $I$ has the form $V^2 U'$ with $U'$ being a prefix of $V$.
	But then $V[|U'|..]$ is the longest common subsequence of $X[x_{|U'|}..]$ and $Y[y_{|U'|}..]$,
	computed in $D_3$.
\end{proof}

\begin{theorem}\label{thmPeriodicThree}
	We can find the longest periodic subsequence with an exponent in $\Rat \cap ((2,3] \cup [4,\infty))$ in \Oh{n^5} time using \Oh{n^3} space.
\end{theorem}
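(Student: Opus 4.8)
The plan is to follow the same template as the proof of \cref{thmSubPeriodic}, now one dimension higher, so that the work is essentially a complexity accounting on top of the correctness already furnished by \cref{lemPeriodicThree}. First I would bound the number of factorizations: writing $T = XYZ$ amounts to choosing the two cut positions, so there are $\binom{n}{2} = \Oh{n^2}$ such factorizations, and I would process them one at a time rather than all at once.

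For a fixed factorization $T = XYZ$, the first step is to precompute a table answering $\LCS{X,Y,Z}{x,y,z}$ in constant time. By the dynamic-programming bound recalled in the preliminaries (instantiated with $k = 3$), this table has size $\Oh{n^3}$ and is filled in $\Oh{3 n^3} = \Oh{n^3}$ time. With these constant-time queries available, I would then allocate the table $D_3$ of size $\Oh{n^3}$ and fill each of its cells in constant time directly from its defining recurrence, since every option in the maximum is either a previously computed entry of $D_3$, a single $\LCSs$-query, or the constant $\delta_{xy}$. Hence the work per factorization is $\Oh{n^3}$.

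By \cref{lemPeriodicThree}, the contribution of this factorization to the exponent range $I = \Rat \cap ((2,3] \cup [4,\infty)) \setminus 3\N$ is read off from $D_3[|X|,|Y|,|Z|]$, while the exponent-in-$3\N$ case is given by $3 \cdot \LCS{X,Y,Z}{|X|,|Y|,|Z|}$; I would retain the larger of the two. The final answer is the maximum of these values over all $\Oh{n^2}$ factorizations, which realizes the whole target range $\Rat \cap ((2,3] \cup [4,\infty))$ as argued in the discussion preceding \cref{lemPeriodicThree}. Multiplying the $\Oh{n^3}$ per-factorization cost by the $\Oh{n^2}$ factorizations yields the claimed $\Oh{n^5}$ time.

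The only point that needs care — and the main obstacle to the space bound — is that a naive implementation keeping the $\LCSs$ and $D_3$ tables for every factorization would consume $\Oh{n^5}$ space. To stay within $\Oh{n^3}$, I would reuse the same $\Oh{n^3}$ workspace across all factorizations, discarding each factorization's tables after folding its value into a single running maximum. Since each factorization determines its contribution solely from its own two tables, this reuse is sound, and the total space is then dominated by the three-dimensional $\LCSs$ table, i.e.\ $\Oh{n^3}$.
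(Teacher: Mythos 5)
Your proposal is correct and takes essentially the same route as the paper's proof: fix each of the $\Oh{n^2}$ factorizations $T = XYZ$, precompute the three-dimensional $\LCSs$ table in $\Oh{n^3}$ time and space for constant-time queries, fill $D_3$ cell-by-cell in $\Oh{n^3}$ time, and combine the per-factorization maxima into a running maximum for $\Oh{n^5}$ total time and $\Oh{n^3}$ space. You are in fact slightly more explicit than the paper on two points it leaves implicit: separately accounting for the exponent-in-$3\N$ case via $3\cdot\LCS{X,Y,Z}{|X|,|Y|,|Z|}$ (which the paper relegates to the proof of \cref{lemPeriodicThree}), and the reuse of the $\Oh{n^3}$ workspace across factorizations to justify the space bound.
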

\begin{proof}
	For the claimed time and space complexities, 
	let us fix a factorization $T = XYZ$.
	We first pre-process $\LCSs$ with a three-dimensional table taking $\Oh{n^3}$ space and \Oh{n^3} time such that we can answer an $\LCSs$ query in $\Oh{1}$ time.
 The table $D_3$ also takes \Oh{n^3} space, and each cell can be filled in constant time thanks to the pre-processing step.
 Finally, we compute the maximum value of $D_3$ for each factorization $T = XYZ$, which are \Oh{n^2} many.
 Hence, we fill $D_3$ \Oh{n^2} times, which gives us the total time of \Oh{n^5}.
\end{proof}

\subsection{Four Factors}

Finally, we consider a factorization of size four to capture exponents in $(3,4)$.
We have ${n \choose 3} = \Oh{n^3}$ possibilities to factorize $T$ into four factors~$W,X,Y,Z$.
Let us fix a factorization $T = WXYZ$.
We fill the 4-dimensional table $D_4[1..|W|,1..|X|,1..|Y|,1..|Z|]$ as follows:
\begin{align*}
D_4[w,x,y,z] := \max 
\begin{cases}
	4 \cdot \LCS{W,X,Y,Z}{w-1,x-1,y-1,z} + \delta_{wxy} & \text{if~} \delta_{xyz} > 0 \\
D_4[w-1,x-1,y-1,z] + \delta_{wxy} \\
D_4[w-1,x  ,y  ,z]\\
D_4[w  ,x-1,y  ,z]\\
D_4[w  ,x  ,y-1,z]\\
D_4[w  ,x  ,y  ,z-1]
\end{cases}
\end{align*}
where $\delta_{wxy} := 3$ if $W[w]=X[x]=Y[y]$ and $0$ otherwise.

\begin{theorem}\label{thmPeriodicFour}
	We can compute the longest periodic subsequence with an exponent $\in \bigcup_{x \in \N} (3x,4x) \cap \Rat$ in \Oh{n^7} time using \Oh{n^4} space.
\end{theorem}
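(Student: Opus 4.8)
The plan is to follow the development of \cref{thmPeriodicThree} one dimension higher. First I would prove the $D_4$-analogue of \cref{lemPeriodicThree}: for a fixed factorization $T = WXYZ$, the entry $D_4[w,x,y,z]$ is the length of the longest subsequence of $T$ of the form $U^3 U'$ with $U'$ a nonempty proper prefix of $U$, where $U$ is a common subsequence of $W[1..w]$, $X[1..x]$, $Y[1..y]$ and $U'$ is a subsequence of $Z[1..z]$. Writing $U = U'V$ with $V = U[|U'|+1..]$, such a subsequence has length $4|U'| + 3|V|$, because the prefix $U'$ occurs four times (once at the start of each of the three copies of $U$ captured in $W,X,Y$, and once as the trailing $U'$ captured in $Z$) while the tail $V$ occurs three times (once per copy of $U$). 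This is exactly the quantity assembled by the recurrence: its first option seeds $4\cdot\LCS{W,X,Y,Z}{w-1,x-1,y-1,z} = 4|U'|$, four synchronized copies of the common prefix, where the guard $\delta_{xyz}>0$ plays the same role as $\LCS{Y,Z}{y-1,z-1}>0$ in \cref{eqSubSequence}, forcing $U'$ to be nonempty so the reported string is genuinely (sub-)periodic and not an exact power; the second option adds $\delta_{wxy}=3$ for each matched character of $V$, extending all three copies of $U$ simultaneously; and the last four options propagate stored values by dropping one index, as in the standard $\LCSs$ recurrence.

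I would establish this invariant by the maximality argument of \cref{lemSubPeriodic,lemPeriodicThree}. Fix a longest target subsequence and an occurrence $U^3U' = W[\cdots]X[\cdots]Y[\cdots]Z[\cdots]$ realizing the three copies of $U$ in $W,X,Y$ and $U'$ in $Z$. Then $U'$ is a common subsequence of the four prefixes up to positions $w_{|U'|}, x_{|U'|}, y_{|U'|}, z_{|U'|}$, so their four-way $\LCSs$ is at least $|U'|$; equality must hold, since any strictly longer four-way common subsequence could replace $U'$ and lengthen $U$ in all three copies at once, contradicting maximality (exactly the replacement used in \cref{lemSubPeriodic}). Consequently the tail $V$ is forced to be the longest common subsequence of the suffixes of $W,X,Y$ beyond these positions, which is precisely what the $\delta_{wxy}$ extensions accumulate on top of the seed. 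Hence $D_4[|W|,|X|,|Y|,|Z|]$ is correct for the fixed factorization, and taking the maximum over all factorizations yields the global optimum.

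It remains to match these subsequences to the claimed exponent domain and to count. On the one hand, every $U^3U'$ with $U'$ a proper nonempty prefix has exponent $3 + |U'|/|U| \in (3,4) \subseteq \bigcup_{x\in\N}(3x,4x)$ with respect to the period $|U|$. On the other hand, a subsequence whose exponent is $e \in (3x,4x)$ with respect to some period $p$ can be written $R^{\lfloor e\rfloor}R'$ with $|R|=p$ and, since $3x \le \lfloor e\rfloor \le 4x-1$, regrouped with period $xp$ into $(R^x)^3 R''$ where $R''$ is a proper nonempty prefix of $R^x$; this is the form $U^3U'$ with $U=R^x$ and exponent $e/x\in(3,4)$. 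Thus the two descriptions delimit the same family, and $D_4$ finds the longest subsequence with an exponent in $\bigcup_{x\in\N}(3x,4x)$. For the complexity, exactly as in \cref{thmPeriodicThree} but with one more dimension, each of the $\binom{n}{3}=\Oh{n^3}$ factorizations $T=WXYZ$ first costs a four-dimensional $\LCSs$ precomputation in \Oh{n^4} time and \Oh{n^4} space (by the preliminaries), after which the \Oh{n^4} cells of $D_4$ are filled in \Oh{1} each; this gives \Oh{n^4} time per factorization, \Oh{n^7} time in total, and \Oh{n^4} words of space reused across factorizations.

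The step I expect to be the main obstacle is the exponent-domain bookkeeping rather than the recurrence: one must verify that allowing $U$ to be an arbitrary, possibly non-primitive, common subsequence is exactly what lets a single pass of $D_4$ absorb every dilated interval $(3x,4x)$ simultaneously, and that the combination of the guard $\delta_{xyz}>0$ with the properness of $U'$ keeps the exponent strictly below $4x$ (so that the integer-exponent powers $U^4$, which fall into the regime already covered by \cref{thmPeriodicThree}, are excluded rather than double-counted). The synchronization of the three copies of $U$ across $W,X,Y$ in the $\delta_{wxy}$ extension also needs care, since a desynchronized set of matches would build a shorter or ill-formed string.
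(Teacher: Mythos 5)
Your proposal is correct and follows essentially the same route as the paper: the paper's own proof consists precisely of the two steps you carry out, namely establishing (analogously to \cref{lemPeriodicThree}) that $D_4[w,x,y,z]$ stores the length of the longest $U^3U'$ with $U'$ a proper prefix of $U$ and a subsequence of $Z[1..z]$ while $U$ is a common subsequence of $W[1..w]$, $X[1..x]$, $Y[1..y]$, and then repeating the counting of \cref{thmPeriodicThree} with one extra dimension ($\Oh{n^3}$ factorizations times $\Oh{n^4}$ per factorization). Your additional reduction of the exponent domain $\bigcup_{x\in\N}(3x,4x)$ to the single form $U^3U'$ via $U = R^x$ matches the discussion the paper gives at the beginning of \cref{secComputingPeriodic}, so your argument does not depart from the paper's (and is in fact spelled out in more detail than the paper's sketch).
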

\begin{proof}
	We can show analogously to \cref{lemPeriodicThree} that
	$D_4[w,x,y,z]$ stores the length of the longest string $U^3 U'$ for which $U'$ is both a proper prefix of $U$ and a subsequence of $Z[1..z]$, while $U$ is a common subsequence of the three strings $W[1..w]$, $X[1..x]$, and $Y[1..y]$.
	The rest can be proven analogously to \cref{thmPeriodicThree} by adding an additional dimension.
\end{proof}

\section{Open Problems}
We are unaware of polynomial-time algorithms computing several other types of regularities when considering subsequences.
For instance, we are not aware of an algorithm computing the longest \emph{sub-periodic} subsequence.
For that, we would need an efficient algorithm computing the longest (common) subsequence \emph{without a border}. 
Then we could use the algorithm of \cref{secStarter} and compute this subsequence without a border instead of the (plain) $\LCSs$.
Other problems are finding the 
longest (common) subsequence that is \emph{primitive} (exponent in $(1,\infty) \setminus \mathbb{N}$), or
	the
	longest (common) subsequence that is \emph{non-primitive} (exponent in $\mathbb{N}\setminus\{1\}$).

\bibliographystyle{abbrvnat}

\end{document}